\newcommand{\beq}{\begin{equation}}
\newcommand{\eeq}{\end{equation}}
 \journalname{Celestial Mechanics and Dynamical Astronomy}
\begin{document}

\title{Seven-body central configurations}

\subtitle{A family of central configurations in the spatial seven-body problem}


\author{Marshall Hampton\and Manuele Santoprete}


\institute{M. Hampton \at
             Department of Mathematics and Statistics,
University of Minnesota - Duluth,
140 CCtr, 10 University Dr.,
Duluth, MN 55812 USA. \\
              \email{mhampton@d.umn.edu}           
           \and
           M. Santoprete \at
              Department of Mathematics,
 Wilfrid Laurier University, Waterloo, ON, Canada\\
 \email{msantoprete@wlu.ca}
}

\date{Received: date / Accepted: date}
\maketitle

\begin{abstract}
The main result of this paper is the existence of a new family of central configurations in the Newtonian spatial seven-body problem. This family is unusual in that it is a {\it simplex stacked central configuration}, i.e the bodies are arranged as concentric three and two dimensional simplexes. 

\keywords{Central configurations\and relative equilibria\and N-body problem\and celestial mechanics.
}
 \subclass{70F15\and 70F10}
\end{abstract}





\section{Introduction}
The Newtonian $n$-body problem concerns the motion of $n$ particles
with masses $m_i\in{\mathbb R}^+$ and positions $q_i\in{\mathbb
R}^d$, where $i=1,\ldots,n$. The motion is governed by Newton's
law of motion 
\beq
\ddot q_i+\gamma_i=0
\eeq 
where
\beq
\gamma_i=\sum_{k\neq i}m_k R_{ik}(q_i-q_k.) \label{eqgamma}\eeq 
and $R_{ik}=\|q_i-q_k\|^{-3}$. The {\it center of mass} $q_G$ of the system is defined by the relation $\sum m_i(q_i-q_G)=0$.

The $n$ particles form a {\it central configuration} if there exists a $\lambda\in{\mathbb R}$ such that $\gamma_i=\lambda(q_i-q_G)$ (Wintner 1941).

Several aspects of the $n$-body problem motivate the study of central configurations. 
First of all, for suitable initial velocities, central  configurations may define a {\it homothetic} motion (i.e. a motion where  only the size of the configuration can change)
or, in more generality, a {\it homographic} motion (a motion where the configuration can rotate and change size). In the case of a strictly three-dimensional motion homographic motions that are not homothetic are forbidden. In the planar $n$-body problem, instead, there exist homographic motions which are {\it relative equilibria} (i.e. motions where the configuration rotates but does not change size). 

Secondly, for the $n$-body problem, every motion starting or ending in total collision is asymptotic to an homothetic motion and has a central configuration as normalized limit (Saari 1980).

Finally, central configurations appear as a key point when studying the topological changes of the integral manifolds. The integral manifold $I_{hc}$ of the $n-$body problem is the set of points having energy $h$ and angular momentum $c$. Changes in the topology of the integral manifold are caused by central configurations. 
Topological informations on the integral manifold may provide qualitative informations about the motion of the bodies. For further details and references see Albouy (1993).

In this paper we will exhibit a new family of spatial central configurations in the seven body problem. This family is unusual in that  it is a {\it simplex stacked central configuration}, namely a configuration formed by an equilateral triangle contained in a tetrahedron (see figure \ref{tetrahedron}). Examples of trivial {\it stacked central configurations} (i.e. configurations where a subset of the points is a central configuration) are well known. For instance the  
configurations where $N$ particles lie at the vertices of a regular $N-$gon  are stacked central configurations when $N$ is not a prime number. Other simple examples are given by the ``rosette central configurations", i.e. planar central configurations where $N$ particles of mass $m_1$ lie at the vertices of a $N-$gon, $N$ particles of masses $m_2$ lie  at the vertices of another $N$-gon rotated an angle of $\pi/n$ from the other, and a particle of mass $m_0$ lies at the center  of the two $N-$gons (see Sekiguchi (2004) and Lei and Santoprete (2006)).

The  first nontrivial example of stacked central configuration, where two bodies on a line are contained in an equilateral triangle,  was found only recently (see Hampton (2005)). 
The new family of central configurations discussed in the present  paper seems to be the natural generalization of the five body one found in Hampton (2005).

The results of this paper and those of Hampton (2005) suggest that such families of stacked central configurations might generalize to higher dimensions, with $n + m + 2$ bodies in ${\mathbb R}^{n}$ arranged as concentric $n$- and $m$-dimensional simplices (for $m < n$).  However, we have been unable to find a proof  that would easily generalize.

\section{Central Configurations in $\mathbb{R}^d$}
We now want to express the equations for central configurations in a way that is more convenient for our purposes. From equation (\ref{eqgamma}) it is easy to obtain
\beq 
\gamma_i-\gamma_j=(m_i+m_j)R_{ij}(q_i-q_j)+\sum_{k\neq i,j} m_k(R_{ik}(q_i-q_k)-R_{jk}(q_j-q_k)).
\label{eqgamma2}
\eeq
On the other hand the equation for the central configuration can be expressed as $\gamma_i-\gamma_j=\lambda(q_i-q_j)$.
Taking the wedge product of equation (\ref{eqgamma2}) with the vector $q_i-q_j$ we get
\beq
\sum_{k\neq i,j}m_k (R_{ik}-R_{jk})\Lambda_{ijk}=0 \label{eqcc}
\eeq
where $\Lambda_{ijk}=(q_i-q_j)\wedge(q_i-q_k)$. For a non-collinear configuration, the system of equations (\ref{eqcc}) says that $\gamma_i-\gamma_j$ and $q_i-q_j$ are linearly dependent and therefore is equivalent to the definition of central configuration. The equations (\ref{eqcc}), in the particular case of a planar central configuration, are known as the Laura-Andoyer equations and the bivector $\Lambda_{ijk}$ is simply twice the oriented area of the triangle $(q_i,q_j,q_k)$.
G. Meyer (Meyer 1933) generalized  the Laura-Andoyer equations to  to higher dimensions (see also Hagihara 1970; Albouy 2003). 

Equation (\ref{eqcc}) can be written in a slightly different way if  one  takes the wedge product of equation (\ref{eqgamma}), (with  the indexes renamed: $i=i_1$, $j=i_2$) with $(q_{i_2}-q_{i_3})\wedge (q_{i_3}-q_{i_4})\wedge \ldots\wedge(q_{i_{d-1}}-q_{i_{d}})$. Thus the equations for the central configurations in ${\mathbb R}^d$ can be written as
\beq\label{eqsimplex}
\sum_{k\neq i_1,\ldots,i_d} m_k(R_{i_1k}-R_{i_2k})\Delta_{i_1,i_2,\ldots,i_d,k}=0
\eeq
where $\Delta_{i_1,i_2,\ldots,i_d,k} = \det(q_{i_1}-q_{i_2},\ldots,q_{i_{d-1}}-q_{i_{d}},q_{i_{d}}-q_k)$ i.e. it is $d!$ times the signed volume of the $d$-dimensional simplex formed by the masses $m_{i_1},m_{i_2},$ $\ldots,m_{i_d},m_k$.

In particular, when $d=3$, the equations above can be written as 
\beq\label{eqtetra}
f_{ijh}=\sum_{k\neq i,j,h} m_k(R_{ik}-R_{jk})\Delta_{ijhk}=0
\eeq
where $\Delta_{ijhk}$ is the coefficient of $(q_i-q_j)\wedge(q_j-q_h) \cdot(q_h-q_k)$, i.e. it  is six times the signed volume of the tetrahedron formed by $m_i,m_j,m_h,m_k$. Moreover, since $f_{ijh}=-f_{jih}$, the system of equations (\ref{eqtetra}) provides $n(n-1)(n-2)/2$ equations. Usually  the $\Delta_{ijhk}$ will be replaced by the non-negative volumes $D_{ijhk}$ in order to make the sign of the terms in our equations more apparent.

\begin{remark} If $n=d+1$,  and all the mutual distances $r_{ij}$ are equal, that is, when all masses form a regular simplex of dimension $d$, then condition (\ref{eqcc}) is verified. Thus the regular simplex of dimension $d$ is a central configuration in ${\mathbb R}^d$ for any value of the masses. 
\end{remark}
\begin{figure}[thb]
  \begin{center}
\leavevmode
 \epsfxsize=5.5cm  
 \epsfbox{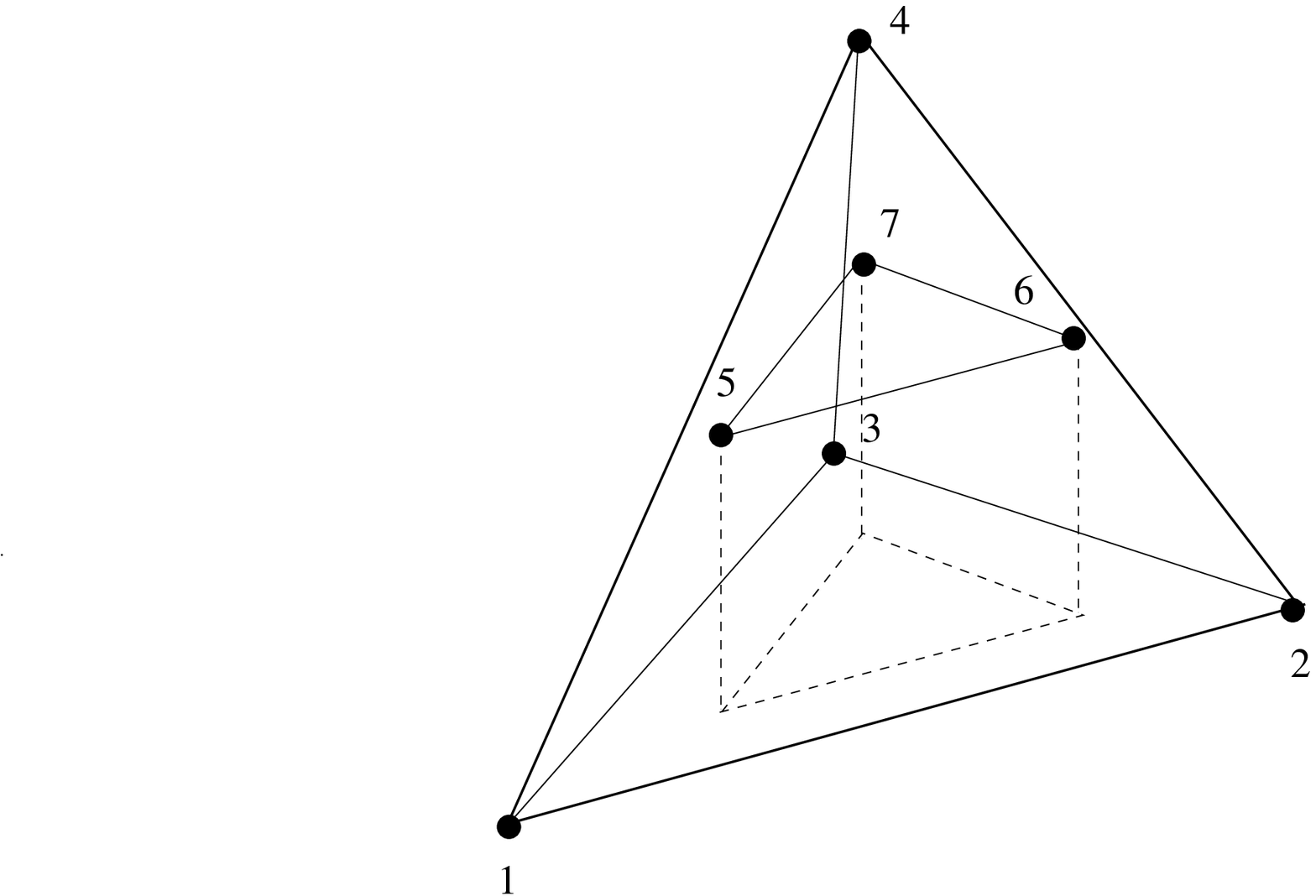}%
\caption{A typical configuration}  \label{tetrahedron}
  \end{center}
\end{figure}
\section{Stacked Central Configurations in ${\mathbb R}^3$}
We now investigate configurations with the following symmetries and relations:
\beq
\begin{split}\label{eqsymm}
r_{12}=&r_{13}=r_{23}=r_{14}=r_{24}=r_{34}=1\\
r_{15}=&r_{26}=r_{37}\\
r_{45}=&r_{46}=r_{47}\\
r_{56}=&r_{67}=r_{57}\\
r_{16}=&r_{17}=r_{25}=r_{27}=r_{35}=r_{36}.
\end{split}
\eeq

Figure \ref{tetrahedron} depicts a typical configuration.  These symmetries result in many of the $D_{ijkl}$ being equal.  These equalities will be exploited to simplify a given $D_{ijkl}$ to $D_{mnop}$ where $\{mnop\} = \{ijkl\}$ and $m < n < o < p$ ; e.g. $D_{1475}$ will be replaced by $D_{1457}$.

We want to prove the following
\begin{theorem}
For every $r_{15}\in (0,\sqrt{6}/4)$ there  are positive masses $m_1=m_2=m_3$, $m_4$ and $m_5=m_6=m_7$ and distances
which satisfy equation (\ref{eqsymm}) with  $1>r_{16}>r_{45}>r_{15}$ such that the seven points with these masses form a spatial central configuration.
\end{theorem}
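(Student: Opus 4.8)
The plan is to set up the problem as a finite system of polynomial (or algebraic) equations in the mutual distances, exploit the symmetry relations \eqref{eqsymm} to drastically reduce the number of independent unknowns and equations, and then solve for the masses as functions of a single free parameter, say $r_{15}$. First I would parametrize the configuration geometrically: the bodies $m_1,m_2,m_3,m_4$ form a regular tetrahedron of edge length $1$ (the first four relations in \eqref{eqsymm}), and $m_5,m_6,m_7$ form an equilateral triangle placed concentrically and symmetrically, so that the whole configuration is invariant under the cyclic permutation $(123)(567)$ together with the symmetry fixing $m_4$. Using the relations in \eqref{eqsymm}, the independent distances reduce to essentially $r_{15}, r_{16}, r_{45}, r_{56}$, and by placing the configuration in coordinates (tetrahedron centered at the origin, triangle in a plane orthogonal to the symmetry axis) I would express $r_{16}, r_{45}, r_{56}$ as explicit functions of $r_{15}$ and the one remaining geometric degree of freedom (the signed height of the triangle's plane along the axis). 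This realizability step must be checked carefully so that the chosen distances actually correspond to an embeddable configuration in $\mathbb{R}^3$.

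Next I would write down the central configuration equations \eqref{eqtetra}, i.e. $f_{ijh}=0$. By the symmetry, the full list of $n(n-1)(n-2)/2 = 105$ equations collapses into only a handful of genuinely distinct equations; the group action identifies most of them, and several vanish identically because symmetric terms cancel (equal $R_{ik}-R_{jk}$ factors against equal volumes, or antisymmetric volume factors summing to zero). The key is to identify a minimal independent set — I expect two or three equations — relating the three mass ratios $m_1:m_4:m_5$ and the geometric parameters. Because these equations are linear in the masses $m_k$ (each $f_{ijh}$ is a linear combination of the $m_k$ with coefficients $(R_{ik}-R_{jk})D_{ijhk}$ built from distances only), I would treat them as a homogeneous linear system $M(\text{distances})\,\mathbf{m}=0$ in the mass vector $\mathbf{m}=(m_1,m_4,m_5)$ (after using $m_1=m_2=m_3$ and $m_5=m_6=m_7$). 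A nontrivial mass solution exists wherever the relevant maximal minors of $M$ vanish simultaneously, and solving those vanishing-minor conditions pins down the free geometric parameter in terms of $r_{15}$, after which the masses are recovered up to scale from the null vector of $M$.

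The main obstacle will be the positivity and the ordering constraints. Once the linear system yields mass ratios as explicit (but messy) rational-algebraic functions of $r_{15}$, I must show that on the stated interval $r_{15}\in(0,\sqrt{6}/4)$ all three masses $m_1,m_4,m_5$ are simultaneously positive, and that the resulting distances satisfy the strict inequalities $1>r_{16}>r_{45}>r_{15}$. This is where the specific endpoint $\sqrt{6}/4$ should emerge: I expect it to be the value at which one of the masses (or a distance inequality) degenerates — likely where a mass hits zero or where $r_{45}=r_{15}$ or $r_{16}=1$ — so I would analyze the sign of each mass expression and each distance difference as $r_{15}$ ranges over $(0,\sqrt{6}/4)$, probably by examining the limiting behavior at the two endpoints and invoking continuity/intermediate-value arguments to guarantee strict positivity throughout the open interval.

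Finally I would close the argument by verifying that the geometric solution genuinely embeds in $\mathbb{R}^3$ (nonnegative Cayley–Menger / squared-volume quantities) and that the single $\lambda$ is consistent across all bodies, so that the distance solution together with the positive masses indeed satisfies the original central configuration equations $\gamma_i=\lambda(q_i-q_G)$ and not merely the reduced wedge equations. The expected difficulty throughout is analytic rather than conceptual: the coefficients $(R_{ik}-R_{jk})D_{ijhk}$ are algebraically complicated, so controlling the sign of the resulting mass and distance expressions over the whole interval — rather than merely exhibiting one sample point — is the step I anticipate requiring the most care, and possibly a careful choice of intermediate variables to keep the expressions tractable.
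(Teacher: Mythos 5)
Your outline matches the paper's strategy in broad strokes---symmetry reduction to masses $(m_1,m_4,m_5)$ and parameters $(r_{15},r_{16})$ with $r_{45}$ determined by the Cayley--Menger relation (\ref{rel}), then a homogeneous linear system in the masses whose rank deficiency is to be arranged---but there is a genuine gap at the heart of your second step. After all symmetry identifications the surviving equations are not ``two or three'' but four: $f_{147}$, $f_{251}$, $f_{271}$, $f_{461}$. The first contains $m_5$ only as an overall factor, so it is a purely geometric constraint, $H=0$ in the notation of (\ref{eqh}); the other three form a genuine $3\times 3$ homogeneous system $A\,(m_1,m_4,m_5)^{T}=0$. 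For fixed $r_{15}$ you have a single free parameter $r_{16}$, but \emph{two} scalar conditions: $H=0$ and $\det A=0$. Generically that is overdetermined, and your plan of ``solving the vanishing-minor conditions to pin down the free geometric parameter'' would return no solutions at all. The proof goes through only because of a nontrivial algebraic identity---the volume relations of Lemma \ref{function} together with the explicit computation that the third row of $A$ equals $\alpha v_1+\beta v_2$ with $\alpha=-D_{1247}/D_{1235}$ and $\beta=D_{1245}/D_{1235}$---which shows that $\det A$ vanishes \emph{identically} on the set $\Omega_H$ where $H=0$, so the determinant condition imposes nothing new. Your proposal neither anticipates the need for this compatibility nor supplies a mechanism to establish it; without it the approach stalls exactly where the real work lies.

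Two further points. First, existence of a root of $H=0$ for every $r_{15}\in(0,\sqrt{6}/4)$ cannot come from ``solving'' the algebraic system explicitly; the paper obtains it by an intermediate value argument on the region $\Omega$, showing $H>0$ along the boundary curve $R_{45}=R_{16}$ and $H<0$ along the opposite boundary ($D_{1457}=0$, respectively $R_{45}=R_{15}$). Your proposal leaves this existence mechanism implicit. Second, your plan to prove positivity of the masses from explicit mass formulas plus endpoint/continuity analysis is likely intractable given how complicated those expressions are; the paper instead reads positivity directly off the sign pattern of the coefficient matrix $A$ (its first and third rows force every component of a nonzero null vector to have the same sign), a far cleaner device worth adopting. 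Finally, your guess about the endpoint is off: $\sqrt{6}/4$ is the circumradius of the unit regular tetrahedron, the value at which the triangle $567$ collapses to the circumcenter---a geometric degeneration, not a mass reaching zero or $r_{16}$ reaching $1$.
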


In order to prove the theorem we can start observing that, since the configuration is highly symmetric,  there are only  two degrees of freedom, which can be parametrized with $r_{15}$ and $r_{16}$. Let $s_{ij}=r_{ij}^2$, then the other two distances are determined by
\[
r_{56} = r_{16}^2 - r_{15}^2 = s_{16} - s_{15}
\label{r56eq}
\]
(that indicates that the polygon of vertices $\{1,5,6,2\}$  is a isosceles trapezoid)
and $g(s_{15},s_{16},s_{45}) :=$
\beq
-3 + 2s_{15} - 3s_{15}^2 + 4s_{16} + 4s_{15}s_{16} - 4s_{16}^2 + 2s_{45} + 2s_{15}s_{45} + 4s_{16}s_{45} - 3s_{45}^2=0.
\label{rel}
\eeq
The last equation follows from imposing that the 4-volume of the pentachoron (4-simplex) of vertices $\{1,2,3,4,5\}$ is zero.
This  can be done setting the following Cayley-Menger determinant (see Sommerville 1958)  to zero:

\beq-9216V^2=\left |\begin{array}{cccccc}
0 & 1 & 1 & 1 & 1 & 1 \\ 
1 & 0 &s_{12}  &s_{13} & s_{14} & s_{15}  \\ 
1 & s_{21} & 0 & s_{23} & s_{24} & s_{25} \\ 
1 & s_{31} & s_{32} & 0 & s_{34} & s_{35} \\ 
1 & s_{41} & s_{42} & s_{43} & 0 & s_{45}\\ 
1 &s_{51} & s_{52} & s_{53} & s_{54} & 0
\end{array} \right|
\eeq
 where the $s_{ij}$ are computed for the above configuration and $V$ is the 4-volume.

Recall that, where it is convenient,  we   replace the $\Delta_{ijhk}$ by the non-negative volumes 
$D_{ijhk}$. However in order to do that we need $\Delta_{1457}\leq 0$, since this oriented volume can have different signs for the configuration defined by the equations (\ref{eqsymm}).  This corresponds to requiring that the triangle $567$ is inside the convex hull of the regular tetrahedron. 

In the following we will look for the configurations satisfying $1>r_{16}>r_{45}>r_{15}$.  Since the triangle $567$ collapses to a point at $r_{15} = \sqrt{6}/4$, we restrict our attention to  $r_{15} = \in (0,\sqrt{6}/4)$.

Let us condense the above constraints on the configuration into a region:
\begin{definition}
$\Omega \subset {\mathbb R}^3$ is the set \[\begin{split}\{ &(r_{15},r_{16},r_{45}) \ |  \ r_{15}\in (0,\sqrt{6}/4),\ \Delta_{1457}\leq 0,  \ g(r_{15}^2, r_{16}^2, r_{45}^2) = 0,\\ &1>r_{16}>r_{45}>r_{15} \}\end{split}\]
\end{definition}

Note that $\Omega$ can be written as a graph over its projection in the $(r_{15},r_{16})$ plane, and that is usually how we will think of it.  In particular, $r_{45}$ can be expressed explicitly as a function of $r_{15}$ and $r_{16}$.  In a slight abuse of notation, we may refer to seven-body configurations `in $\Omega$', by which we mean a configuration whose $r_{ij}$ satisfy equation (\ref{eqsymm}), equation (\ref{rel}), and whose $(r_{15},r_{16},r_{45})$ are in $\Omega$.

By assumption we will have $r_{15}\neq r_{16}$, $r_{15}\neq r_{17}$.
From equation (\ref{eqtetra}), with $i=1, j=2, h=3$, we obtain 
\beq
f_{123}=-m_5(R_{15}-R_{25})D_{1235}-m_6(R_{16}-R_{26})D_{1236}=0
\eeq 
and since $R_{15}=R_{26}$, $R_{16}=R_{25}$ and $D_{1235}=D_{1236}$ we have
\beq
(m_5-m_6)(R_{15}-R_{25})D_{1235} = 0 
\eeq
and we find $m_5=m_6$.
Similarly with $i=2$, $j=3$, $h=1$ we find $m_6=m_7$.
Thus $m_5=m_6=m_7$.

Also from equation (\ref{eqtetra}), with $i=5,j=6$ and $h=7$, we obtain
\beq
f_{567}=-m_1(R_{51}-R_{61})D_{1567} - m_2(R_{52}-R_{62})D_{2567}=0
\eeq
and since $D_{2567}=D_{1567}$ we have
\beq
(m_1-m_2)(R_{51}-R_{61})D_{1567}=0
\eeq
and thus $m_1=m_2$. Similarly with $i=6,j=7,h=5$ we find $m_2=m_3$.
Thus $m_1=m_2=m_3$.

Now let $i=1$, $j=4$ and $h=7$. Then from equation (\ref{eqtetra}) we obtain

\beq
f_{147}=m_5((R_{15}-R_{45})\Delta_{1457}+(R_{16}-R_{45})\Delta_{1467})=0
\eeq
As long as $m_5\neq 0$ the previous equation gives a relatively simple constraint on the geometry of our configuration:
\beq
 H:=(R_{15}-R_{45})D_{1456}-(R_{45}-R_{16})D_{1467}=0.
\label{eqh}
\eeq

Thus we are interested in the subset of $\Omega$ where $H = 0$:
\begin{definition}
Let $\Omega_H \in \Omega$ denote the subset of $\Omega$ for which $H = 0$.
\end{definition}

Before analyzing $H$ in more detail we prove two technical lemmas. 
Let $1/a$ be the distance  between the plane $P$ defined by the points 1, 2 and 3 and the plane $P'$ defined by the points 5, 6 and 7,  and let $b=r_{56}$. With this notation we can prove the following

\begin{lemma}\label{function}  Given a configuration in $\Omega$ we have  $f_{147}=b f_{241}$, $f_{175}=-bf_{251}$, $f_{645}=bf_{461}$
and $f_{275}=b(f_{251}+f_{271})$.
\end{lemma}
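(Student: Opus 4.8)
The plan is to reduce each of the four identities to a purely geometric statement about the oriented volumes $\Delta_{ijhk}$ and then verify those statements in coordinates adapted to the symmetry axis. I would begin by expanding each $f_{ijh}$ from equation~(\ref{eqtetra}) and using the distance relations~(\ref{eqsymm}) to simplify the radial factors $R_{ik}=r_{ik}^{-3}$. Two simplifications occur repeatedly: any term whose two relevant indices sit at mutual unit distance contributes $R_{ik}-R_{jk}=0$ and drops out, and by~(\ref{eqsymm}) the surviving radial factors collapse onto the few values $R_{15},R_{16},R_{45}$ together with the unit value. After this reduction each $f_{ijh}$ becomes a short linear combination of the $\Delta_{ijhk}$ with coefficients such as $R_{15}-R_{45}$ and $R_{16}-R_{45}$, exactly as in the reduction of $f_{147}$ already carried out above equation~(\ref{eqh}).

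Once both sides of an identity such as $f_{147}=b\,f_{241}$ are in this reduced form, it will follow provided the corresponding oriented volumes satisfy relations of the type $\Delta_{1457}=b\,\Delta_{2416}$ and $\Delta_{1467}=b(\Delta_{2415}+\Delta_{2417})$, and analogously for the other three identities. I would isolate these volume relations and prove them directly, so that the identities of the lemma follow by substitution; this route avoids having to argue that the radial coefficients $R_{15}-R_{45}$, $R_{16}-R_{45}$ are algebraically independent, which they are not for configurations constrained to $\Omega$.

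For the geometric relations themselves I would use the threefold rotational symmetry: place the axis that cycles $1\to2\to3$, cycles $5\to6\to7$, and fixes $4$ along the vertical, with the equilateral triangle $123$ in the plane $P$ and the triangle $567$ in the parallel plane $P'$ a distance $1/a$ away; the edge of the upper triangle is $b=r_{56}$, so its circumradius is $b/\sqrt{3}$. In these coordinates each relevant $\Delta_{ijhk}$ is a $3\times3$ determinant that factors as an in-plane (horizontal) quantity times the vertical separation $1/a$. Because every one of the volume relations equates combinations of $\Delta$'s sharing the same vertical structure, the common factor $1/a$ cancels — which is why $a$ does not appear on the right-hand sides — and what remains is a two-dimensional identity among the areas and edge vectors of the two triangles, in which the edge length $b$ of the upper triangle survives as the ratio. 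Verifying these planar identities is then a direct finite computation from the explicit positions.

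The hard part will be the orientation bookkeeping. The $\Delta_{ijhk}$ are \emph{signed}, so the passage between the ordered volumes and their non-negative counterparts $D_{ijhk}$ must be tracked through each index permutation, and it is precisely these relative signs that produce the minus sign in $f_{175}=-b\,f_{251}$ and the two-term right-hand side of $f_{275}=b(f_{251}+f_{271})$. I expect the bulk of the effort, and the main risk of error, to lie in fixing these orientations consistently across all four identities; once the symmetry-adapted coordinates are in place, the underlying algebra is routine.
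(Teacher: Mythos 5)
Your proposal is correct and takes essentially the same route as the paper: the published proof likewise expands each $f_{ijh}$ via equation~(\ref{eqtetra}), reduces each identity to proportionality relations among oriented volumes such as $\Delta_{1457}=-b\Delta_{1245}$ and $\Delta_{1467}=-b(\Delta_{1245}+\Delta_{1247})$, and verifies these by direct computation in exactly your symmetry-adapted parametrization by the inter-plane distance $1/a$ and edge length $b$. One minor caveat: determinants involving the apex $4$ do not literally factor as a planar quantity times $1/a$ (e.g.\ $\Delta_{1457}=\sqrt{3}\,b\,(3-\sqrt{6}a+\sqrt{6}ab)/(18a)$), but since the paired volumes share the same $a$-dependent factor the ratios still come out as $\pm b$, so your direct coordinate computation goes through unchanged.
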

\begin{proof}
To prove that $f_{147}=-b f_{241}$ note that
\beq
\begin{split}
f_{241}&= m_5[(R_{16}-R_{45})(\Delta_{1245}+\Delta_{1247})
+(R_{15}-R_{45})\Delta_{1245}]\\
f_{147}&= -m_5[(R_{16}-R_{45})\Delta_{1467}+(R_{15}-R_{45})\Delta_{1457}].
\end{split}
\eeq
A simple computation shows that
\begin{align}
\Delta_{1457}&=\frac{\sqrt{3}b(3-\sqrt{6}a+\sqrt{6}ab)}{18a}=-b\Delta_{1245}\\
\Delta_{1467}&=- \frac{\sqrt{3}b(-6+\sqrt{6}ab+2\sqrt{6}a)}{18a}=-b(\Delta_{1245}+\Delta_{1247}). 
\end{align}
This concludes the first part of the proof.

To show that  $f_{175}=-bf_{251}$ consider 
\beq
\begin{split}
f_{251}=& -m_1(1-R_{16})\Delta_{1235}-m_4(1-R_{45})\Delta_{1245}\\ &+m_5(R_{16}-R_{56})\Delta_{1257}\\
f_{175}=& -m_1(1-R_{16})\Delta_{1257}+ m_4(1-R_{45})\Delta_{1456}\\ &+ m_5(R_{16}-R_{56})\Delta_{1567}; 
\end{split}
\eeq
then it is easy to find the following relationships between volumes
\beq
\begin{split}
\Delta_{1257}& =\sqrt{3}b/(2a) = - b\Delta_{1235}\\
\Delta_{1456}&= -\frac{\sqrt{3}b(3-\sqrt{6}a+\sqrt{6}ab)}{18a}= b\Delta_{1245}\\
\Delta_{1567}&= -\sqrt{3}b^2/(2a)=-b\Delta_{1257}.
\end{split}
\eeq
To show that $f_{645}=bf_{461}$ consider 
\beq
\begin{split}
f_{461}=&m_1[(1-R_{15})\Delta_{1245} - (1-R_{16})\Delta_{1247}]\\ &+m_5(R_{45}-R_{56})(-\Delta_{1456}+\Delta_{1467})\\
f_{645}=&-\{-m_1[(1-R_{15})\Delta_{1456}+(1-R_{16})(\Delta_{1456} + \Delta_{1467})]\\ &+ m_5(R_{45}-R_{56})\Delta_{4567}\}
\end{split}
\eeq
then we have the following relationships
\beq
\begin{split}
&\Delta_{1456}= -\frac{\sqrt{3}b(3-\sqrt{6}a+\sqrt{6}ab)}{18a}  = b\Delta_{1245}\\
&\Delta_{1456}+\Delta_{1467}= -\frac{\sqrt{3}b(-3+\sqrt{6}a+2\sqrt{6}ab)}{18a}  = -b\Delta_{1247}\\
&\Delta_{4567} = \frac{\sqrt{3}b^2(-3+\sqrt{6}a)}{6a}  = b(\Delta_{1456}-\Delta_{1467}).
\end{split}
\eeq
Finally to prove that $f_{275}=b(f_{251}+f_{271})$ we consider 
\[\begin{split}
f_{275}=&m_1[(1-R_{16})\Delta_{1257}+(1-R_{15})\Delta_{1257}]+m_4(1-R_{45})\Delta_{1467}\\&+m_5(R_{15}-R_{56})\Delta_{1567}
\end{split}\]

and
\beq
\begin{split} 
f_{251} =& -m_1(1-R_{16})\Delta_{1235}  - m_4(1-R_{45})\Delta_{1245}+m_5(R_{16}-R_{56})\Delta_{1257}\\
f_{271}=& -m_1(1-R_{15})\Delta_{1235} -m_4(1-R_{45})\Delta_{1247}\\ &- m_5(R_{16}+R_{15}-2R_{56})\Delta_{1257}
\end{split}
\eeq
Standard computations give the following relationships
\beq
\begin{split}
\Delta_{1257}=&\frac{\sqrt{3}b}{2a}= -b\Delta_{1235}\\
\Delta_{1467}=&\frac{\sqrt{3}b(6-2\sqrt{6}a-\sqrt{6}ab)}{18a}= -b(\Delta_{1245}+\Delta_{1247})\\
\Delta_{1567}=&-\frac{\sqrt{3}b^2}{2a}=-b\Delta_{1257}
\end{split}
\eeq
This completes the proof.
\end{proof}

Using the lemma above we have 

\begin{lemma}
Given a configuration in $\Omega_H$ there exists positive masses for which the configuration is a central configuration. 
\end{lemma}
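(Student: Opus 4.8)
\section*{Proof proposal}

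The plan is to show that, once the geometric constraints defining $\Omega_H$ are imposed, the full central-configuration system (\ref{eqtetra}) reduces to a linear homogeneous system in the three mass parameters $(m_1,m_4,m_5)$ that admits a solution in the open positive octant. First I would exploit the $S_3$ symmetry of the configuration --- the group permuting the base vertices $1,2,3$ and the vertices $5,6,7$ of the inner triangle in tandem while fixing body $4$ --- together with the already established equalities $m_1=m_2=m_3$ and $m_5=m_6=m_7$. This collapses the $n(n-1)(n-2)/2 = 105$ equations $f_{ijh}=0$ into a short list of orbit representatives, each of which is linear and homogeneous in $(m_1,m_4,m_5)$.

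Next I would separate these representatives into two kinds. Some of them, such as $f_{147}$ and $f_{241}$, factor out the single mass $m_5$ and so become purely geometric conditions; by the definition of $\Omega_H$ together with Lemma \ref{function} (and the fact that $b=r_{56}>0$), every such equation is equivalent to $H=0$ and is therefore automatically satisfied on $\Omega_H$. The remaining representatives are genuine mass equations, and here Lemma \ref{function} does the essential bookkeeping: the identities $f_{175}=-bf_{251}$, $f_{645}=bf_{461}$ and $f_{275}=b(f_{251}+f_{271})$ show that several of these equations either coincide or are linear combinations of others. Consequently, after the reduction one is left with at most two independent linear constraints --- one relating only $m_1$ and $m_5$ (coming from $f_{461}=0$) and one relating all three masses (coming from $f_{251}=0$).

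A homogeneous linear system of rank at most two in three unknowns always has a nontrivial solution, obtained explicitly from the $2\times 2$ minors of its coefficient matrix (equivalently, as the cross product of the two coefficient vectors). Solving $f_{461}=0$ for the ratio $m_5/m_1$ and then $f_{251}=0$ for $m_4$ in terms of $m_1$ and $m_5$ yields a one-parameter family of mass vectors, unique up to an overall positive scaling, and all of the remaining equations are then satisfied by the symmetry-and-Lemma reduction just described.

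The main obstacle --- and the real content of the lemma --- is positivity: I must verify that this mass vector lies in the open positive octant for \emph{every} configuration in $\Omega_H$, not merely that it is nonzero. This reduces to checking that the relevant $2\times 2$ minors all carry the same sign throughout $\Omega_H$. These minors are explicit but unwieldy functions of $(r_{15},r_{16},r_{45})$ assembled from the factors $R_{ij}=r_{ij}^{-3}$ and the signed volumes $\Delta$, and I would control their signs using the defining inequalities of $\Omega$, namely $1>r_{16}>r_{45}>r_{15}$, $\Delta_{1457}\le 0$, and $r_{15}\in(0,\sqrt6/4)$, together with the relation $g=0$ that expresses $r_{45}$ as a function of $r_{15}$ and $r_{16}$. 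Reducing each sign question to an inequality in the two free parameters and confirming it on the region is the delicate step; it is precisely where the geometric ordering $r_{16}>r_{45}>r_{15}$ and the bound $r_{15}<\sqrt6/4$ enter, and I expect essentially all of the real work to lie there.
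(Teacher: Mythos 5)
There is a genuine gap, and it sits exactly where the real difficulty of the lemma lies. After the symmetry reduction, the surviving mass equations are \emph{three}, not two: $f_{251}=0$, $f_{271}=0$, and $f_{461}=0$ (plus $f_{147}=0$, which is the geometric condition $H=0$). Lemma \ref{function} does not make $f_{271}$ redundant --- its identity $f_{275}=b(f_{251}+f_{271})$ shows only that $f_{275}$ is a consequence of $f_{251}$ and $f_{271}$, not the reverse. So you are facing a $3\times 3$ homogeneous linear system $Ax=0$ in $x=(m_1,m_4,m_5)$, which generically has only the zero solution. Your assertion that ``after the reduction one is left with at most two independent linear constraints'' assumes without proof that $\mathrm{rank}\,A\le 2$, i.e.\ that $\det A=0$ on $\Omega_H$; establishing this is precisely the central computation of the paper's proof. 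There one shows the third row is an explicit combination $v_3=\alpha v_1+\beta v_2$ with $\alpha=-D_{1247}/D_{1235}$ and $\beta=D_{1245}/D_{1235}$, and the verification uses $H=0$ in an essential second way: it permits rewriting $R_{45}$ as the weighted average $(R_{15}D_{1456}+R_{16}D_{1467})/(D_{1456}+D_{1467})$, which is what makes the third-row entry $(R_{56}-R_{45})(D_{1456}+D_{1467})$ decompose correctly. Without this step, the mass vector you construct from $f_{461}=0$ and $f_{251}=0$ alone has no reason to satisfy $f_{271}=0$, and the lemma is not proved.

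A secondary point: you locate ``the real content'' in positivity and anticipate delicate two-parameter inequality work over $\Omega_H$, but in the paper this is the easy part. Once the determinant is known to vanish, positivity follows from the sign pattern of $A$ alone: the third row has signs $(-,0,+)$, forcing $m_1$ and $m_5$ to share a sign, and the first row has signs $(-,+,-)$, forcing $m_4$ to share that sign as well; the sign entries themselves follow directly from the ordering $1>r_{16}>r_{45}>r_{15}$ (hence $R_{15}>R_{45}>R_{16}>1$) and $r_{56}<r_{16}$, with no minor computations needed. So your proposal inverts the difficulty profile of the problem: the step you wave through (rank $\le 2$) is the hard one, and the step you flag as hard (positivity) is a short sign-pattern argument.
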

\begin{proof}
We have seen that, to have $f_{123}=f_{231}=0$ and $f_{567}=f_{675}=0$,  we must choose $m_1=m_2=m_3$ and $m_5=m_6=m_7$. Because of the symmetry of the configuration and of the above choice of the masses we have that
\beq
\begin{split}
f_{123}&=f_{124}=f_{125}=f_{126}=f_{127}=f_{132}=f_{134}=f_{135}=f_{136}=f_{137}\\
&=f_{145}=f_{154}=f_{231}=f_{234}=f_{235}=f_{236}=f_{237}=f_{246}=f_{347}\\
&=f_{264}=f_{374}=f_{451}=f_{462}=f_{473}=f_{561}=f_{562}=f_{563}=f_{564}\\
&=f_{567}=f_{571}=f_{572}=f_{573}=f_{574}=f_{576}=f_{671}=f_{672}=f_{673}\\
&=f_{674}=f_{675}=0
\end{split}
\eeq
A first inspection of the equations shows that the  remaining functions that must vanish are $f_{147}$ and $f_{241}$, $f_{251}$ and $f_{175}$, $f_{271}$, $f_{461}$ and $f_{645}$ and $f_{275}$.
However the equations above are not independent for the highly symmetric configuration under discussion.
Inspecting carefully the remaining equations one finds that the  remaining functions that must vanish are actually only  $f_{147}$, $f_{251}$, $f_{271}$, $f_{461}$ (see Lemma \ref{function}). 

Since $H=0$, $f_{147}$ vanishes. $f_{251}$, $f_{271}$ and  $f_{461}$ form a $3\times 3$ homogeneous linear system  of the form $Ax=0$, where $x=(m_1,m_4,m_5)$ and in $\Omega_H$ the matrix $A$ takes the form

\beq\small
\left(\begin{array}{ccc}
(1-R_{16})D_{1235}&(R_{45}-1)D_{1245} &(R_{16}-R_{56})D_{1257}\\
(1-R_{15})D_{1235}&(R_{45}-1)D_{1247}&(2R_{56}-R_{16}-R_{15})D_{1257}\\
(1-R_{15})D_{1245}+(R_{16}-1)D_{1247}&0&(R_{56}-R_{45})(D_{1456}+D_{1467})\\
\end{array}
\right)
\eeq

For a nonzero mass solution $(m_1,m_4,m_5)$ to exist, the determinant of $A$ must be zero.
The fact that the determinant is zero can be proved showing that the third row is a linear combination of the first two. 
In order to do that, since $H=0$, $R_{45}$ can be written as
\[
R_{45}=\frac{R_{15}D_{1456}+R_{16}D_{1467}}{D_{1456}+D_{1467}}
\]
so
\[
(R_{56}-R_{45})(D_{1456}+D_{1467})=(R_{56}-R_{15})D_{1456}+(R_{56}-R_{16})D_{1467}.
\]
Let $v_1$, $v_2$ and $v_3$ be the rows of $A$. Then a standard computation shows that $v_3=\alpha v_1+\beta v_2$ where 
\beq
\begin{split}
\alpha &=-\frac{-3+2\sqrt{6}~ab+\sqrt{6}~a}{9}=-D_{1247}/D_{1235}\\
\beta &=-\frac{3+\sqrt{6}~ab-\sqrt{6}~a}{9}=D_{1245}/D_{1235}.
\end{split}
\eeq
This is because
\beq
\begin{split}
\alpha(1-R_{45})&D_{1245}+\beta(1-R_{45})D_{1247}\\&=-(1-R_{45})D_{1245}\frac{D_{1247}}{D_{1235}}+(1-R_{45})D_{1247}\frac{D_{1245}}{D_{1235}}=0.
\end{split}
\eeq
Moreover
\beq
\begin{split}
(R_{56}&-R_{15})D_{1257}\beta+(R_{56}-R_{16})(D_{1257}\beta-D_{1257}\alpha)\\
&=(R_{56}-R_{15})D_{1456}+(R_{56}-R_{16})D_{1467}
\end{split}
\eeq
since
\beq
D_{1257}\beta=-\frac{\sqrt{3}b(3-\sqrt{6}a+\sqrt{6}ab)}{18a}=D_{1456}
\eeq
\beq
D_{1257}\beta-D_{1257}\alpha=\frac{\sqrt{3}b(-6+2\sqrt{6}a+\sqrt{6}ab)}{18a}=D_{1467}
\eeq
Finally the fact that the masses may be chosen all positive is a consequence of the sign pattern 
of the coefficient matrix. For the configurations in $\Omega_H$ the sign pattern is
\beq
\left(\begin{array}{ccc}
-&+&-\\
-&+&*\\
-&0&+
\end{array}
\right)
\eeq

where $*$ denotes a sign that is unimportant for the discussion.
The first and last row imply that all the components of a nonzero nullvector have the same sign.
\end{proof}

\begin{figure}[t]
  \begin{center}
        \resizebox{!}{6.5cm}{\includegraphics{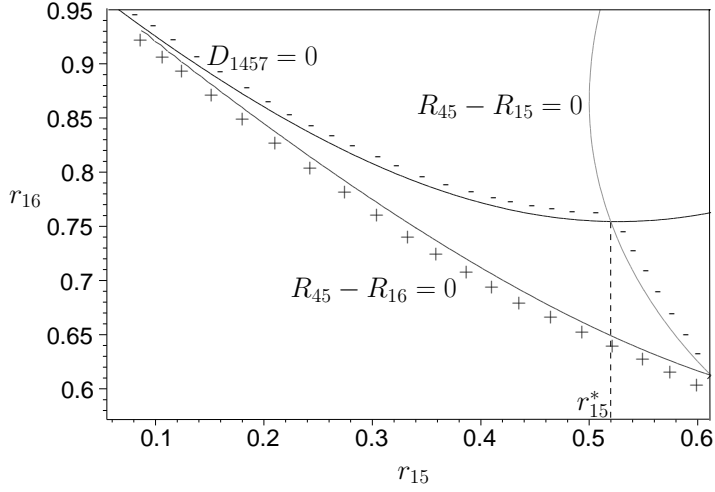}}   
    \caption{The curves $D_{1457}=0$ and $R_{45}=R_{16}$ in the plane $(r_{15},r_{16})$. In $\Omega$ $H<0$ along $D_{1456}=0$ and $H>0$ along $R_{45}=R_{16}$. The points above the curve $D_{1457}=0$ are not in $\Omega$.}
  \end{center}\label{figab}
\end{figure}
To complete  the proof of the theorem we only have to show that for every $r_{15}\in (0,\sqrt{6}/4)$ we can choose $r_{16}$ and $r_{45}$ so that $H=0$ and $(r_{15},r_{16},r_{45}) \in \Omega$. 
Note that in $\Omega$, $H$ is negative when $D_{1456}=0$ (i.e. $D_{1457}=0$, since $D_{1457}=D_{1456}$) and when $R_{45}=R_{15}$. $H$ is positive when $R_{45}=R_{16}$. 
For a given value of $0<r_{15}<r_{15}^*$  one can vary $r_{16}$ from the curve $R_{45}=R_{16}$ to the curve $D_{1457}=0$ (Figure 2
). Thus $H=0$ is satisfied for some value of $r_{16}$. For  $r_{15}^*\leq r_{15}<\sqrt{6}/4$ (i.e. when the corresponding point on the curve $R_{45}-R_{15}$ is in $\Omega$) one can vary $r_{16}$ from the curve $R_{45}=R_{16}$ to the curve $R_{45}-R_{15}=0$. Hence $H=0$ for some value of $r_{16}$.

 Therefore there is at least one central configuration for every $r_{15}\in(0,\sqrt{6}/4)$.

We can show a little more:

\begin{theorem} \label{uglytheorem}
There is only one such configuration for each $r_{15}\in (0,\sqrt{6}/4)$.
\end{theorem}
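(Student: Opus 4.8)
The plan is to fix $r_{15}\in(0,\sqrt6/4)$ and to show that, along the one–parameter slice of $\Omega$ obtained by holding $r_{15}$ fixed, the function $H$ is \emph{strictly monotone} in $r_{16}$. Together with the sign change already exhibited in the proof of the existence theorem, strict monotonicity forces $H$ to vanish at exactly one point of the slice, and hence yields exactly one admissible pair $(r_{16},r_{45})$, i.e.\ exactly one configuration, for each $r_{15}$.

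First I would make the reduction to a single variable explicit. Since $\Omega$ is a graph over the $(r_{15},r_{16})$–plane, fixing $r_{15}$ determines $r_{45}=r_{45}(r_{15},r_{16})$ through the constraint $g(s_{15},s_{16},s_{45})=0$ and $r_{56}$ through $s_{56}=s_{16}-s_{15}$; thus $H$ restricts to a function $h(r_{16}):=H(r_{15},r_{16},r_{45}(r_{15},r_{16}))$ of the single variable $r_{16}$, defined on the interval for which $(r_{15},r_{16},r_{45})\in\Omega$. At the two ends of this interval the evaluation of $H=(R_{15}-R_{45})D_{1456}-(R_{45}-R_{16})D_{1467}$ is exactly the one carried out in the existence proof: using $r_{15}<r_{45}<r_{16}<1$ one finds $h>0$ at the end where $r_{45}=r_{16}$ (there $R_{45}=R_{16}$ and $H=(R_{15}-R_{45})D_{1456}>0$) and $h<0$ at the opposite end, where either $D_{1456}=D_{1457}=0$ or $r_{45}=r_{15}$, so that the single surviving term $-(R_{45}-R_{16})D_{1467}$ is negative. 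Hence $h$ has at least one root, and the theorem reduces to showing $h'(r_{16})\neq0$ throughout the slice.

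To compute $h'$ I would work in the squared distances $s_{ij}$ and eliminate $s_{45}$ by implicit differentiation of the constraint. Writing $H$ as a function of $(s_{15},s_{16},s_{45})$ (with $s_{56}=s_{16}-s_{15}$ substituted), holding $s_{15}$ fixed gives
\beq
\frac{dH}{ds_{16}}=\frac{\partial H}{\partial s_{16}}-\frac{\partial H}{\partial s_{45}}\,\frac{\partial g/\partial s_{16}}{\partial g/\partial s_{45}} .
\eeq
Because the only non–polynomial ingredients of $H$ are the factors $R_{ij}=s_{ij}^{-3/2}$, clearing these positive half–integer powers turns $dH/ds_{16}$ into a rational function whose sign is governed by an explicit polynomial in $s_{15},s_{16},s_{45}$. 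Here it is convenient to use the closed forms for all the volumes $\Delta_{ijkl}$ appearing in $H$ supplied by Lemma \ref{function}, which express them through $a$ and $b$ (with $b^2=s_{16}-s_{15}$ and $1/a$ the interplane distance); carrying the differentiation in the $(a,b)$ variables keeps the volume factors polynomial.

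The main obstacle is then a sign question: one must show that this polynomial has a constant sign on the semialgebraic set $\Omega$ cut out by $g=0$, $\Delta_{1457}\le0$, and $r_{15}<r_{45}<r_{16}<1$ with $r_{15}<\sqrt6/4$. I would attack it by using $g=0$ to eliminate one variable, reducing the claim to a two–variable polynomial inequality on a bounded region, and then certify the sign either by displaying the polynomial as a combination of terms that are manifestly of one sign under the defining inequalities of $\Omega$, or — if no such factorization presents itself — by a Sturm / cylindrical–algebraic–decomposition computation. Once $dH/ds_{16}$ is shown to be strictly of one sign, $h$ is strictly monotone, so its passage from positive to negative occurs at a single value of $r_{16}$; this unique root gives the unique $(r_{16},r_{45})\in\Omega$ with $H=0$, proving that there is exactly one central configuration for each $r_{15}\in(0,\sqrt6/4)$. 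If strict monotonicity were to fail somewhere, a fallback would be to bound the number of sign changes of $h$ directly, e.g.\ via the number of real roots of the reduced polynomial on the admissible interval.
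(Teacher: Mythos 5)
Your overall strategy is the same as the paper's: fix $r_{15}$, differentiate $H$ along the slice using implicit differentiation of the constraint (\ref{rel}), reduce the sign of $h'=dH/dr_{16}$ to a polynomial sign question, and certify that sign by computer algebra, combining it with the endpoint sign change from the existence argument (your endpoint analysis matches the paper's exactly). The one genuine divergence is that you ask for strict monotonicity of $h$ on the \emph{whole} slice, i.e.\ a constant sign of $dH/ds_{16}$ on all of $\Omega$, and you propose as your primary route a single manifestly-signed decomposition. The paper's own computations show this is too optimistic: after the same rearrangement of $H'$ into polynomial volume expressions, one of the two sign-determining pieces, $(D_{1456}^2)'$, ``does indeed become negative within $\Omega$, and even on the set $\Omega_H$,'' and neither of the two natural certificate polynomials $p_1$, $p_2$ keeps a constant sign over the full interval: their eliminants $q_1$, $q_2$ (degrees $404$ and $466$) have four and three roots respectively in $(0,\sqrt{6}/4)$. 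So no single constant-sign certificate on $\Omega$ --- or even on $\Omega_H$ --- is available by this decomposition, and your primary plan would stall at the certification step.

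The repair is exactly the paper's weakening: you do not need monotonicity on the whole slice, only $h'\neq 0$ \emph{at the zeros of} $h$; if $h'<0$ at every zero, each zero is a transversal down-crossing, and two such crossings would force an intermediate zero with $h'\geq 0$, a contradiction, so the root is unique. Restricting to $\Omega_H$ lets one adjoin $H=0$ to the ideal (the paper's $b_1$) and use $H=0$ to eliminate $D_{1467}$ via $D_{1467}=D_{1456}\frac{R_{15}-R_{45}}{R_{45}-R_{16}}$, producing the second certificate $p_2$; even then the paper must glue the two certificates ($p_1$ positive on $\Omega_H$ for $r_{15}\in(0,r_1)$, $p_2$ positive for $r_{15}\in(r_7,\sqrt{6}/4)$, and one of the two on the middle range), verifying signs at rational sample points $\tfrac12$, $\tfrac{11}{21}$, $\tfrac35$ with Gr\"{o}bner bases and interval arithmetic. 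Your fallback sentence (bounding sign changes / CAD on the constrained set) gestures at this, but as written the proposal claims more than is provable by the stated method; with the weakening to $\Omega_H$ made explicit, your plan becomes essentially the paper's proof.
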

\begin{proof}
To prove that there is only one such configuration for each $r_{15}$ it is enough to show that $H'<0$, where the prime denotes differentiation with respect to $r_{16}$ and with $r_{15}$ held fixed.  
From equation (\ref{rel}), using implicit differentiation, we have
\beq
r_{45}' =-\frac{2r_{16}(1+r_{15}^2-2r_{16}^2+r_{45}^2)}{r_{45}(1+r_{15}^2+2r_{16}^2-3r_{45}^2)}
\eeq
which is negative for the configurations in $\Omega$.

Recall that $H =(R_{15}-R_{45})D_{1456}-(R_{45}-R_{16})D_{1467}$. Now consider the derivative 
\beq
\begin{split}
H'=&\ 3 r_{45}^{-4} r_{45}'D_{1456}+  (R_{15} - R_{45})D_{1456}' +( 3 r_{45}^{-4}r_{45}' - 3 r_{16}^{-4})D_{1467}\\ &- (R_{45}-R_{16})D_{1467}' \\=& \frac{1}{2 D_{1456}}\left(6 r_{45}^{-4}r_{45}' D_{1456}^2 +  (R_{15} - R_{45})(D_{1456}^2)'\right)  \\
&+ \frac{1}{2 D_{1467}}\left(6( r_{45}^{-4}r_{45}' - r_{16}^{-4})D_{1467}^2 - (R_{45}-R_{16})(D_{1467}^2)'\right)
\end{split}
\eeq
The point of the above rearrangements is to get all of the volume expressions into polynomial form.  The only parts of this expression whose signs are not clear on the region of interest are the volume derivatives  $(D_{1467}^2)'$ and $(D_{1456}^2)'$.  Unfortunately, the latter does indeed become negative within $\Omega$, and even on the set $\Omega_H = 0$.  

We will first show that $(D_{1467}^2)'$ is positive 
 on the set $\Omega_H$.  Let us write $$(D_{1467}^2)' = \frac{4 r_{16} (r_{16}^2 - r_{15}^2)t_1}{1 + r_{15}^2 + 2 r_{16}^2 - 3 r_{45}^2},$$ where $t_1$ is a polynomial in $r_{15}$, $r_{16}$, and $r_{45}$.  The sign is clearly determined by the sign of $t_1$.  If we compute the remainder $t_2$ of $t_1$ divided by the polynomial in equation (\ref{rel}), with respect to the lexicographic monomial ordering $r_{16} \succ r_{15} \succ r_{45}$, we find that $$t_2 = (1 - r_{15}^2 + r_{45}^2)\left [ 2(r_{16}^2 - r_{45}^2)(1 - r_{15}^2 + r_{45}^2) + 16 D_{145}^2 \right ],$$ where $D_{145}$ is the area of the triangle formed by the points $q_1$, $q_4$, and $q_5$.  This form of the expression is clearly positive. 

To rule out a zero of $H'$ on the set $H = 0$ we began by computing two Gr\"{o}bner bases (using Magma) of the following systems.  Both were constrained to include the set $H = 0$, by using the following two polynomials: $H$  multiplied by $(R_{15}-R_{45})D_{1456}+(R_{45}-R_{16})D_{1467}$, with denominators cleared (denoted below as $b_1$), and the condition ($\ref{rel}$) converted to the $r_{ij}$ variables (denoted as $b_2$).  Both of these polynomials also possess a number of components irrelevant to our analysis, but this is unavoidable.  To exclude components of the solutions of these systems with any of the $r_{ij} = 0$, the condition $r_{15}r_{16}r_{45}w-1 = 0$ was also added (denoted by $b_3$).  These three polynomials will be referred to as the base system:

\beq
\begin{split}
b_1 & =  -r_{45}^6 r_{15}^{10}+2 r_{16}^3 r_{45}^3 r_{15}^{10}+r_{16}^8 r_{15}^8-r_{16}^6 r_{15}^8+2 r_{16}^2 r_{45}^6 r_{15}^8-4 r_{16}^5 r_{45}^3 r_{15}^8-r_{16}^6 r_{45}^2 r_{15}^8 \\
&  -2 r_{16}^6 r_{45}^3 r_{15}^7-r_{16}^{10} r_{15}^6-r_{45}^{10} r_{15}^6+r_{16}^8 r_{15}^6+2 r_{16}^2 r_{45}^8 r_{15}^6+2 r_{45}^8 r_{15}^6+2  r_{16}^3 r_{45}^7 r_{15}^6 \\
& -2 r_{16}^4 r_{45}^6 r_{15}^6+2 r_{16}^2 r_{45}^6 r_{15}^6-r_{45}^6 r_{15}^6-4 r_{16}^5 r_{45}^5 r_{15}^6-4 r_{16}^3 r_{45}^5r_{15}^6+4 r_{16}^7 r_{45}^3 r_{15}^6 \\
& -4 r_{16}^5 r_{45}^3 r_{15}^6+2 r_{16}^3 r_{45}^3 r_{15}^6+r_{16}^8 r_{45}^2 r_{15}^6+r_{16}^6 r_{45}^2 r_{15}^6+2r_{16}^6 r_{45}^5 r_{15}^5+2 r_{16}^8 r_{45}^3 r_{15}^5 \\
& +2 r_{16}^6 r_{45}^3 r_{15}^5+r_{16}^6 r_{45}^6 r_{15}^4-2 r_{16}^6 r_{45}^7 r_{15}^3+2 r_{16}^8  r_{45}^5 r_{15}^3+2 r_{16}^6 r_{45}^5 r_{15}^3  -2 r_{16}^{10} r_{45}^3 r_{15}^3 \\
& +2 r_{16}^8 r_{45}^3 r_{15}^3-2 r_{16}^6 r_{45}^3 r_{15}^3-r_{16}^6 r_{45}^8 r_{15}^2-r_{16}^8 r_{45}^6 r_{15}^2-r_{16}^6 r_{45}^6 r_{15}^2+r_{16}^6 r_{45}^{10} \\
& - r_{16}^8 r_{45}^8-r_{16}^6 r_{45}^8+r_{16}^{10} r_{45}^6-r_{16}^8 r_{45}^6+r_{16}^6 r_{45}^6, \\
b_2 & =  3 r_{15}^4-4 r_{16}^2 r_{15}^2-2 r_{45}^2 r_{15}^2-2 r_{15}^2+4 r_{16}^4+3 r_{45}^4-4 r_{16}^2-4 r_{16}^2 r_{45}^2-2 r_{45}^2+3, \\
b_3 & =  r_{15} r_{16} r_{45} w - 1
\end{split}
\eeq

The first system also included the sign-determining factor of $6 r_{45}^{-4}r_{45}' D_{1456}^2 +  (R_{15} - R_{45})(D_{1456}^2)'$, a polynomial $p_1$ of total degree $11$ in the variables $r_{15}$, $r_{16}$, and $r_{45}$\footnote{To speed the computation, the actual initial system used in the Groebner basis computation included another polynomial in the ideal generated by $p_1$ and $b_2$, but this is theoretically redundant.}:
\beq
\begin{split}
p_1 & = 6 r_{15}^{11}-24 r_{16}^2 r_{15}^9-r_{45}^2 r_{15}^9+36 r_{16}^4 r_{15}^7+8 r_{45}^4 r_{15}^7-5 r_{16}^2   r_{45}^2 r_{15}^7 -16 r_{45}^2 r_{15}^7\\
& +r_{45}^5 r_{15}^6-30 r_{16}^6 r_{15}^5-3 r_{45}^6 r_{15}^5+18   r_{16}^4 r_{15}^5-16 r_{16}^2 r_{45}^4 r_{15}^5-2 r_{45}^4 r_{15}^5\\
& -18 r_{16}^2 r_{15}^5+22 r_{16}^4   r_{45}^2 r_{15}^5+22 r_{16}^2 r_{45}^2 r_{15}^5+r_{45}^2 r_{15}^5+6 r_{15}^5-8 r_{45}^7 r_{15}^4\\
& +5   r_{16}^2 r_{45}^5 r_{15}^4-2 r_{45}^5 r_{15}^4+12 r_{16}^8 r_{15}^3+6 r_{45}^8 r_{15}^3-18 r_{16}^6   r_{15}^3-15 r_{16}^2 r_{45}^6 r_{15}^3\\
& -8 r_{45}^6 r_{15}^3+18 r_{16}^4 r_{15}^3+26 r_{16}^4 r_{45}^4 r_{15}^3+8 r_{45}^4 r_{15}^3-6 r_{16}^2 r_{15}^3-22 r_{16}^6 r_{45}^2 r_{15}^3\\
& +4 r_{16}^4 r_{45}^2   r_{15}^3-3 r_{16}^2 r_{45}^2 r_{15}^3-2 r_{45}^2 r_{15}^3+9 r_{45}^9 r_{15}^2-2 r_{16}^2 r_{45}^7   r_{15}^2+2 r_{45}^7 r_{15}^2\\
& -4 r_{16}^4 r_{45}^5 r_{15}^2-4 r_{16}^2 r_{45}^5 r_{15}^2-r_{45}^5   r_{15}^2-6 r_{45}^{11}+9 r_{16}^2 r_{45}^9+8 r_{45}^9-8 r_{16}^4 r_{45}^7\\
& -8 r_{45}^7+4 r_{16}^6   r_{45}^5-4 r_{16}^4 r_{45}^5+3 r_{16}^2 r_{45}^5+2 r_{45}^5
\end{split}
\eeq
Since $H'$ has the form $-(a + b p_1)$ with $a,b$ positive functions, if $p_1$ is positive on $\Omega_H$ then $H' < 0$ on $\Omega_H$.  After eliminating $w$, a lexicographic term order with $r_{45} \succ r_{16} \succ r_{15}$ was used to find a polynomial $q_1(r_{15})$ of degree $404$.  The first and last few terms of $q_1$ are shown below:
\beq
\begin{split}
q_1 & = r_{15}^{404}-(804495232776247942366788446548795369/\\
& 6172585442506994000774739535727460) r_{15}^{402} \\
& +(19481733600623784042235202979633272487804463/ \\
& 2759886403053727157626401541214461915200) r_{15}^{400} + \ldots \\
& -(139/99902129761265947794751686948283396618716394291200) r_{15}^2+ \\
& (1/66601419840843965196501124632188931079144262860800)
\end{split}
\eeq

There are four zeros of $q_1$ in the interval $(0,\frac{\sqrt{6}}{4})$; a rigorous demonstration of this is possible using interval arithmetic.  These zeros are approximately at $r_1 = .5104\ldots$, $r_2 = .5384\ldots$, $r_3 = .5774\ldots$, and $r_4 = .5856\ldots$.  

For the second system, we first eliminated $D_{1467}$ from the expression of $H'$ by using the condition that $H = 0$ rearranged as $D_{1467} = D_{1456}\frac{R_{15} - R_{45}}{R_{45} - R_{16}}$.  After factoring out the positive quantity $D_{1456}$ we can write 
\begin{eqnarray*}
H'= \frac{1}{2 D_{1456}}\left[6 r_{45}^{-4}r_{45}' D_{1456}^2 +  (R_{15} - R_{45})(D_{1456}^2)' + \right.\\
 \left. \frac{R_{45} - R_{16}}{R_{15} - R_{45}}\left(6( r_{45}^{-4}r_{45}' - r_{16}^{-4})D_{1467}^2 - (R_{45}-R_{16})(D_{1467}^2)'\right) \right]
\end{eqnarray*}
Now we clear denominators from the following partial sum: $$(R_{15} - R_{45})(D_{1456}^2)' +  6 \frac{R_{45} - R_{16}}{R_{15} - R_{45}}( r_{45}^{-4}r_{45}' - r_{16}^{-4})D_{1467}^2$$
and add the numerator $p_2$ to our base system.  If this polynomial does not vanish on $\Omega_H$ then neither does $H'$.  The full polynomial $p_2$ is:

\begin{equation*}
\begin{split}
p_2 & =  24 r_{15}^6 r_{16}^{16}-60 r_{15}^8 r_{16}^{14}+4 r_{45}^8 r_{16}^{14}-36 r_{15}^6 r_{16}^{14}-8 r_{15}^3 r_{45}^5 r_{16}^{14}-32 r_{15}^6 r_{45}^2 r_{16}^{14} \\
& -24  r_{15}^6 r_{45}^3 r_{16}^{13}+60 r_{15}^{10} r_{16}^{12}-8 r_{45}^{10} r_{16}^{12}+ 60 r_{15}^8 r_{16}^{12}-4 r_{15}^2 r_{45}^8 r_{16}^{12}  -4 r_{45}^8 r_{16}^{12} \\
& +16   r_{15}^3 r_{45}^7 r_{16}^{12}+24 r_{15}^6 r_{16}^{12}+8 r_{15}^5 r_{45}^5 r_{16}^{12}+8 r_{15}^3 r_{45}^5 r_{16}^{12} +16 r_{15}^6 r_{45}^4 r_{16}^{12}\\ 
& +56   r_{15}^8 r_{45}^2 r_{16}^{12}-4 r_{15}^6 r_{45}^2 r_{16}^{12}+24 r_{15}^6 r_{45}^5 r_{16}^{11}+60 r_{15}^8 r_{45}^3 r_{16}^{11}+36 r_{15}^6 r_{45}^3   r_{16}^{11}\\
& -30 r_{15}^{12} r_{16}^{10}+9 r_{45}^{12} r_{16}^{10}-30 r_{15}^{10} r_{16}^{10}-2 r_{15}^2 r_{45}^{10} r_{16}^{10}-18 r_{15}^3 r_{45}^9 r_{16}^{10}-30   r_{15}^8 r_{16}^{10}\\
& +5 r_{15}^4 r_{45}^8 r_{16}^{10}-4 r_{15}^2 r_{45}^8 r_{16}^{10}+3 r_{45}^8 r_{16}^{10}+4 r_{15}^5 r_{45}^7 r_{16}^{10}-6 r_{15}^6   r_{16}^{10}+3 r_{15}^6 r_{45}^6 r_{16}^{10}\\
& -10 r_{15}^7 r_{45}^5 r_{16}^{10}+8 r_{15}^5 r_{45}^5 r_{16}^{10}-6 r_{15}^3 r_{45}^5 r_{16}^{10}-32 r_{15}^8 r_{45}^4   r_{16}^{10}+6 r_{15}^6 r_{45}^4 r_{16}^{10}\\
& -25 r_{15}^{10} r_{45}^2 r_{16}^{10}+8 r_{15}^8 r_{45}^2 r_{16}^{10}+9 r_{15}^6 r_{45}^2 r_{16}^{10}+6 r_{15}^6   r_{45}^7 r_{16}^9-42 r_{15}^8 r_{45}^5 r_{16}^9\\
& +6 r_{15}^6 r_{45}^5 r_{16}^9-60 r_{15}^{10} r_{45}^3 r_{16}^9-60 r_{15}^8 r_{45}^3 r_{16}^9-24 r_{15}^6   r_{45}^3 r_{16}^9+6 r_{15}^{14} r_{16}^8-6 r_{45}^{14} r_{16}^8\\
& +6 r_{15}^{12} r_{16}^8+9 r_{15}^2 r_{45}^{12} r_{16}^8+8 r_{45}^{12} r_{16}^8+12 r_{15}^3   r_{45}^{11} r_{16}^8+6 r_{15}^{10} r_{16}^8-8 r_{15}^4 r_{45}^{10} r_{16}^8\\ 
& +2 r_{15}^2 r_{45}^{10} r_{16}^8-8 r_{45}^{10} r_{16}^8-18 r_{15}^5 r_{45}^9   r_{16}^8-16 r_{15}^3 r_{45}^9 r_{16}^8+6 r_{15}^8 r_{16}^8+7 r_{15}^6 r_{45}^8 r_{16}^8\\
& -2 r_{15}^4 r_{45}^8 r_{16}^8-r_{15}^2 r_{45}^8 r_{16}^8+2 r_{45}^8   r_{16}^8+16 r_{15}^7 r_{45}^7 r_{16}^8-4 r_{15}^5 r_{45}^7 r_{16}^8+16 r_{15}^3 r_{45}^7 r_{16}^8
\end{split}
\end{equation*}
\begin{equation}
\begin{split}
& +15 r_{15}^8 r_{45}^6 r_{16}^8+8 r_{15}^6 r_{45}^6   r_{16}^8-2 r_{15}^9 r_{45}^5 r_{16}^8+4 r_{15}^7 r_{45}^5 r_{16}^8+2 r_{15}^5 r_{45}^5 r_{16}^8\\
& -4 r_{15}^3 r_{45}^5 r_{16}^8-2 r_{15}^{10} r_{45}^4   r_{16}^8-4 r_{15}^8 r_{45}^4 r_{16}^8-8 r_{15}^6 r_{45}^4 r_{16}^8+7 r_{15}^{12} r_{45}^2 r_{16}^8\\
& -14 r_{15}^{10} r_{45}^2 r_{16}^8-7 r_{15}^8 r_{45}^2   r_{16}^8+2 r_{15}^6 r_{45}^2 r_{16}^8-18 r_{15}^6 r_{45}^9 r_{16}^7-12 r_{15}^8 r_{45}^7 r_{16}^7\\
& -6 r_{15}^6 r_{45}^7 r_{16}^7+24 r_{15}^{10} r_{45}^5   r_{16}^7-6 r_{15}^8 r_{45}^5 r_{16}^7-6 r_{15}^6 r_{45}^5 r_{16}^7+30 r_{15}^{12} r_{45}^3 r_{16}^7\\
& +30 r_{15}^{10} r_{45}^3 r_{16}^7+30 r_{15}^8 r_{45}^3   r_{16}^7+6 r_{15}^6 r_{45}^3 r_{16}^7-30 r_{15}^6 r_{45}^{10} r_{16}^6-18 r_{15}^8 r_{45}^8 r_{16}^6\\
& -6 r_{15}^6 r_{45}^8 r_{16}^6+9 r_{15}^6 r_{45}^{11}   r_{16}^5+15 r_{15}^8 r_{45}^9 r_{16}^5-21 r_{15}^6 r_{45}^9 r_{16}^5+15 r_{15}^{10} r_{45}^7 r_{16}^5\\
& +12 r_{15}^8 r_{45}^7 r_{16}^5+15 r_{15}^6 r_{45}^7   r_{16}^5-9 r_{15}^{12} r_{45}^5 r_{16}^5-3 r_{15}^{10} r_{45}^5 r_{16}^5+3 r_{15}^8 r_{45}^5 r_{16}^5\\
& -3 r_{15}^6 r_{45}^5 r_{16}^5-6 r_{15}^{14} r_{45}^3   r_{16}^5-6 r_{15}^{12} r_{45}^3 r_{16}^5-6 r_{15}^{10} r_{45}^3 r_{16}^5-6 r_{15}^8 r_{45}^3 r_{16}^5\\
& +24 r_{15}^6 r_{45}^{12} r_{16}^4+42 r_{15}^8 r_{45}^{10}  r_{16}^4+6 r_{15}^{10} r_{45}^8 r_{16}^4-6 r_{15}^8 r_{45}^8 r_{16}^4-9 r_{15}^8 r_{45}^{11} r_{16}^3\\
& +3 r_{15}^{10} r_{45}^9 r_{16}^3+21 r_{15}^8 r_{45}^9  r_{16}^3-9 r_{15}^{12} r_{45}^7 r_{16}^3-6 r_{15}^{10} r_{45}^7 r_{16}^3-15 r_{15}^8 r_{45}^7 r_{16}^3\\
& +3 r_{15}^{14} r_{45}^5 r_{16}^3+3 r_{15}^{12} r_{45}^5   r_{16}^3+3 r_{15}^{10} r_{45}^5 r_{16}^3+3 r_{15}^8 r_{45}^5 r_{16}^3-9 r_{15}^6 r_{45}^{14} r_{16}^2\\
& -21 r_{15}^8 r_{45}^{12} r_{16}^2+21 r_{15}^6 r_{45}^{12}   r_{16}^2-21 r_{15}^{10} r_{45}^{10} r_{16}^2-6 r_{15}^8 r_{45}^{10} r_{16}^2-15 r_{15}^6 r_{45}^{10} r_{16}^2\\
& +3 r_{15}^{12} r_{45}^8 r_{16}^2+15 r_{15}^{10}   r_{45}^8 r_{16}^2+3 r_{15}^8 r_{45}^8 r_{16}^2+3 r_{15}^6 r_{45}^8 r_{16}^2+9 r_{15}^8 r_{45}^{14}\\
& -3 r_{15}^{10} r_{45}^{12}-21 r_{15}^8 r_{45}^{12}+9  r_{15}^{12} r_{45}^{10}+6 r_{15}^{10} r_{45}^{10}+15 r_{15}^8 r_{45}^{10}-3 r_{15}^{14} r_{45}^8\\
& -3 r_{15}^{12} r_{45}^8-3 r_{15}^{10} r_{45}^8-3 r_{15}^8 r_{45}^8
\end{split}
\end{equation}

Again we computed a Gr\"{o}bner basis, in the same manner as for the first system, and obtained a polynomial $q_2(r_{15})$ of degree $466$ which is in the ideal generated by the polynomials of the second system.   The first and last few terms of $q_2$ are:

\beq
\begin{split}
q_2 & = r_{15}^{466}- (397970325260732999941673276305737323331716006483/ \\
& 4063279255141998390265984630470563259415746780)r_{15}^{464}\\
& + (717654067077578250690275945137718952732059392991627077/\\
& 585112212740447768198301786787761109355867536320000) r_{15}^{462}+ \ldots \\
&+ (769475777092319973480299/2698353810706234317867355469145285098 \\
& 555299416887370190209448673280000)r15^2\\ 
& - (2505469532410439724481/8994512702354114392891184897150950328517 \\
& 66472295790063403149557760000)
\end{split}
\eeq

Again it is possible to rigorously compute the existence of exactly three roots in the interval $(0,\frac{\sqrt{6}}{4})$.  These are located at approximately $r_5 = .5004\ldots$, $r_6 = .5027\ldots$, and $r_7 = .5252\ldots$.  

If we consider the arrangement of the roots of the polynomials $q_1$ and $q_2$, we see that we need only verify that on the set $\Omega_H$, $p_1$ is positive for $r_{15} \in (0,r_1)$, $p_2$ is positive for $r_{15} \in (r_7, \frac{\sqrt{6}}{4})$, and that either $p_1$ or $p_2$ is positive for $r_{15} \in (r_1, r_7)$.  To do this, we can pick three rational values for $r_{15}$ each of the three intervals just described and compute the Gr\"{o}bner basis of the base system specialized to those values.  We chose to use the values $\frac{1}{2}$, $\frac{11}{21}$, and $\frac{3}{5}$.  Using interval arithmetic, it is possible to verify the appropriate signs of $p_1$ and $p_2$.  

\end{proof}

\begin{figure}[t]
  \begin{center}
    \resizebox{!}{5.5cm}{\includegraphics{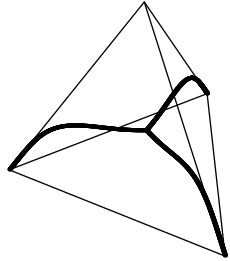}} 
    \caption{Stacked central configuration family}\label{figsevenpic}
  \end{center}
\end{figure}

\begin{figure}[t]
\begin{center}
\subfigure[]
{\label{fig:sub:a}
 \resizebox{!}{4cm}{\includegraphics{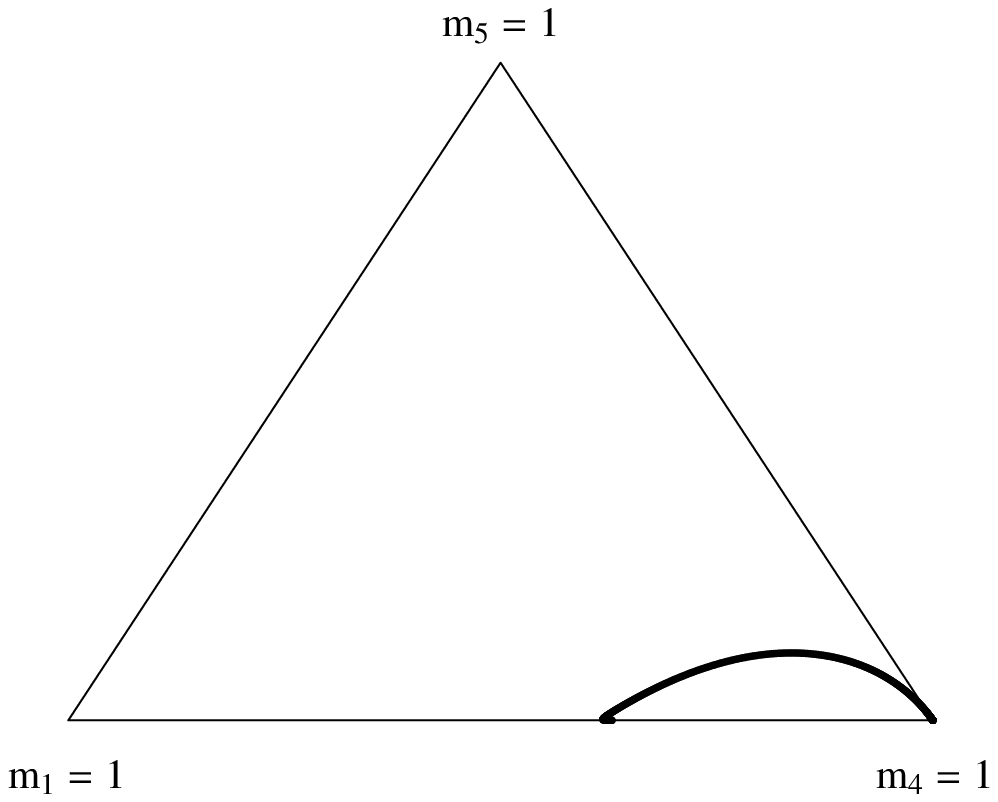}}}
\hspace{2cm}
\subfigure[]
{\label{fig:sub:b}
 \resizebox{!}{4cm}{\includegraphics{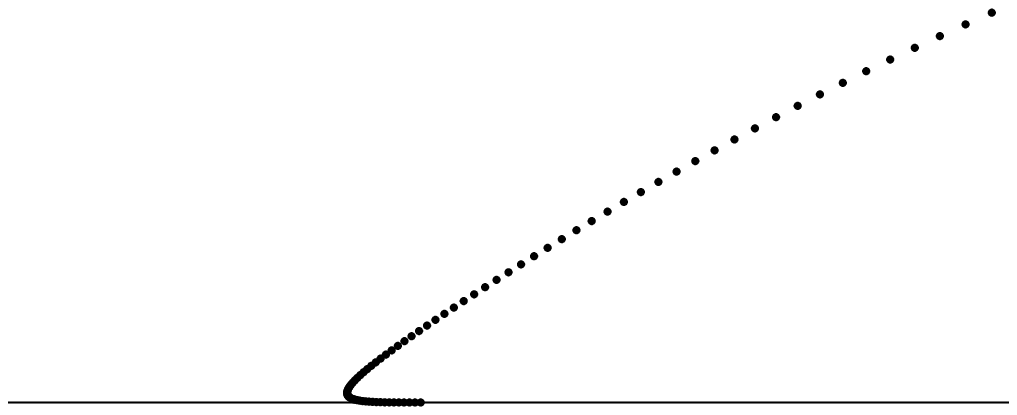}}}
\caption{(a) Masses of the configurations.  (b) Close up of masses near $m_5 = 0$.}
\label{fig:sub} 
\end{center}
\end{figure}

\section{Numerical visualizations}

Since it is often helpful to have an accurate sketch of phenomena, we reproduce here an image of the family of central configurations described above (Figure \ref{figsevenpic}), and the associated masses (Figure \ref{fig:sub}).  The masses were normalized so that $m_1 + m_4 + m_5 = 1$ and then projected into that plane. 

\section{Acknowledgements}
Marshall Hampton was partially supported by NSF grant DMS-0202268.
Manuele Santoprete was partially supported by start-up funds from 
Wilfried Laurier University.

\end{document}